\newtheorem{definitionenv}{Definition}
\newtheorem{lemmaenv}{Lemma}
\newtheorem{theoremenv}[lemmaenv]{Theorem}
\newtheorem{corollaryenv}[lemmaenv]{Corollary}
\newtheorem{propositionenv}[lemmaenv]{Proposition}
\newtheorem{conjectureenv}[lemmaenv]{Conjecture}
\newtheorem{exampleenv}{Example}
\newtheorem{app-lemmaenv}[section]{Lemma}
\newenvironment{definition}{\begin{definitionenv}\rm}{\end{definitionenv}}
\newenvironment{lemma}{\begin{lemmaenv}\rm}{\end{lemmaenv}}
\newenvironment{theorem}{\begin{theoremenv}\rm}{\end{theoremenv}}
\newenvironment{corollary}{\begin{corollaryenv}\rm}{\end{corollaryenv}}
\newenvironment{example}{\begin{exampleenv}\rm}{\end{exampleenv}}
\newenvironment{proposition}{\begin{propositionenv}\rm}{\end{propositionenv}}
\newenvironment{conjecture}{\begin{conjectureenv}\rm}{\end{conjectureenv}}
\newenvironment{app-lemma}{\begin{app-lemmaenv}\rm}{\end{app-lemmaenv}}
\newcommand{\bd}{\begin{definition}}
\newcommand{\ed}{\end{definition}}
\newcommand{\bl}{\begin{lemma}}
\newcommand{\el}{\end{lemma}}
\newcommand{\elp}{\hspace*{\fill} $\Box$
                 \end{lemma}}
\newcommand{\bt}{\begin{theorem}}
\newcommand{\et}{\end{theorem}}
\newcommand{\etp}{\hspace*{\fill} $\Box$
                 \end{theorem}}
\newcommand{\bc}{\begin{corollary}}
\newcommand{\ec}{\end{corollary}}
\newcommand{\ecp}{\hspace*{\fill} $\Box$
                 \end{corollary}}
\newcommand{\bcj}{\begin{conjecture}}
\newcommand{\ecj}{\end{conjecture}}
\newcommand{\be}{\begin{example}}
\newcommand{\ee}{\end{example}}
\newcommand{\eep}{\hspace*{\fill} $\Box$
                 \end{example}}
\newcommand{\bp}{\begin{proposition}}
\newcommand{\ep}{\end{proposition}}
\newcommand{\epp}{
                 \end{proposition}}
\newcommand{\bra}[1]{\langle#1|}
\newcommand{\ket}[1]{|#1\rangle}
\newcommand{\braket}[2]{\langle#1|#2\rangle}
\newcommand{\wt}[1]{\text{wt}\left(#1\right)}
\newcommand{\diag}{\text{diag}}
\newcommand{\eeq}{ \setcounter{equation} {\value{enumi}}}
\newcommand{\cC}{\mathcal{C}}
\newcommand{\cF}{\mathcal{F}}
\newcommand{\cG}{\mathcal{G}}
\newcommand{\cH}{\mathcal{H}}
\newcommand{\cS}{\mathcal{S}}
\newcommand{\cV}{\mathcal{V}}
\newcommand{\cW}{\mathcal{W}}
\newcommand{\mF}{\mathbb{F}}
\newcommand{\mC}{\mathbb{C}}
\newcommand{\bfa}{{\mathbf a}}
\newcommand{\bfb}{{\mathbf b}}
\newcommand{\bfc}{{\mathbf c}}
\newcommand{\bfu}{{\mathbf u}}
\newcommand{\bfv}{{\mathbf v}}
\newcommand{\bfp}{{\mathbf p}}
\newcommand{\bfw}{{\mathbf w}}
\def\F{{\mathbb F}}
\newcommand{\tm}[1]{\textnormal{#1}}
\def\x{\times}
\def\beq{\begin{equation}}
\def\eeq{\end{equation}}
\def\bean{\begin{IEEEeqnarray*}{rCl}}
\def\eean{\end{IEEEeqnarray*}}
\begin{document}

\date{}
\title{\Large\bf On the MacWilliams Identity for Classical and Quantum Convolutional Codes}

\author{Ching-Yi Lai, Min-Hsiu Hsieh, and Hsiao-feng Lu\thanks{
Part of this work was in Proceedings of IEEE Intl. Symp. Inf. Theory 2014 \cite{LH14a},
and part of this work was  in Proceedings of IEEE Information Theory Workshop 2014 \cite{LHL14b}.

C.-Y. Lai and M.-H. Hsieh are with the Centre for Quantum Computation \& Intelligent Systems,
Faculty of Engineering and Information Technology,
University of Technology, Sydney, New South Wales, Australia 2007. M.-H. Hsieh is also with the UTS-AMSS Joint Research Laboratory for Quantum Computation and Quantum Information Processing, Academy of Mathematics and Systems Science, Chinese Academy of Sciences, Beijing 100190, China.
(emails: cylai0616@gmail.com and Min-Hsiu.Hsieh@uts.edu.au)

H.-F. Lu is with the Department of Electrical and Computer Engineering, at National Chiao Tung University,  Taiwan.
(email: francis@mail.nctu.edu.tw).
}
}

\maketitle

\thispagestyle{empty}

\begin{abstract}
The weight generating functions associated with convolutional codes (CCs) are based on state space realizations or the weight adjacency matrices (WAMs).
The MacWilliams identity for CCs on the WAMs was first established by Gluesing-Luerssen and Schneider in the case of minimal encoders, and generalized by Forney.
We consider this problem in the viewpoint of constraint codes and obtain a simple and direct proof of this MacWilliams identity in the case of minimal encoders.
For our purpose, we choose a different representation for the exact weight generating function (EWGF) of a  block code, by
defining it as a linear combination of orthonormal vectors in Dirac bra-ket notation. 
This representation provides great flexibility so that general split weight generating functions and their MacWilliams identities
can be easily obtained from the MacWilliams identity  for  EWGFs. 
As a result,
we also obtain the MacWilliams identity for the input-parity weight adjacency matrices  of a  systematic convolutional code and its dual.
Finally, paralleling the development of the classical case, we establish the MacWilliams identity for quantum convolutional codes.

\end{abstract}

\section{Introduction}

In coding theory,  a fundamental theorem is the MacWilliams identity for linear block codes,
which provides a precise relation between the weight generating functions of a code and its dual \cite{MS77}.
The weight generating function of a code details the distribution of codeword weights, which can be used to analyze the error performance of the code.

Convolutional codes (CCs)  offer a rather different coding paradigm {from block codes}. The convolutional structure allows a much lower complexity for encoding and decoding circuits without deteriorating its error-correcting ability \cite{Forney70}.
The \emph{free-distance enumerator}  is the first  notion that counts the weight distribution of the fundamental paths  that start and end in the zero states of a code's state diagram without passing any intermediate zero states  \cite{Viterbi71}. This free-distance enumerator is crucial in the error analysis of a CC; however, it was realized later that the MacWilliams identity does not hold for the free distance enumerators \cite{SM77}.

A more refined notion of a weight generating function is the weight adjacency matrix (WAM) \cite{MS67,AG92,McE02}. Each entry of this matrix is the weight distribution of all outputs associated with the corresponding state transitions. Unfortunately, a general WAM strongly depends on the underlining encoder and state space description.
However, the WAM is shown to be an invariant of a CC if the encoder is minimal \cite{GL05}. A breakthrough was made by Gluesing-Luerssen and Schneider in~\cite{GLS08,GLS09}, where the MacWilliams identity for the WAMs of a CC and its dual is established.
Later, Forney employed the normal graph duality theorem \cite{Forney01} and obtained a more general MacWilliams theorem for various weight generating functions of CCs without the requirement of minimal encoder. 

This paper outlines a direct proof of the MacWilliams identity for CCs with minimal encoders. First we  define the exact weight generating function (EWGF) of a linear block code as a linear combination of orthonormal vectors in Dirac bra-ket notation. The EWGFs of a code and its dual are directly related by a Fourier transform operator, which gives the MacWilliams identity for linear block codes~\cite{MS77} and facilitates the derivation of   MacWilliams identities for general split weight generating functions.
As for CCs, we begin with the \emph{constraint codes} of a CC, introduced by Forney \cite{Forney11}, which detail the state evolutions in the state diagram.
We show  that the dual of a CC with minimal encoder can be defined by the dual of its constraint codes.
As a result, we obtain a direct proof of the MacWilliams identity for CCs with minimal encoders, which is simpler than that in~\cite{GLS09}, without using Forney's normal graph duality theorem. 
(The assumption of minimal encoders can be removed by using the normal graph duality theorem \cite{Forney11,Forney01}.)
In addition, the scalar in the MacWilliams identity for CCs that is missing in \cite{Forney11} is explicitly given in our proof.
Moreover, 
we derive a MacWilliams identity for the input-parity WAMs (IPWAMs) of CCs with systematic encoders, and this answers an open question proposed by Gluesing-Luerssen and Schneider~\cite{GLS09}. 
Finally, we consider  weight enumeration over all codewords of a  CC and prove its MacWilliams identity.
Following that, relations between various weight enumerations, including the free-distance enumerators, are discussed. 

Quantum convolutional codes (QCCs) receive great attention for their capabilities in protecting a stream of quantum information in quantum communication, since large blocks of quantum information are very fragile to decoherence \cite{OT03,FGG07}.
The WAMs and free-distance enumerators of QCCs are defined accordingly and they function like the classical counterparts~\cite{PTO09,WHZ14}. 
We proceed to define the dual code of a QCC within the framework of entanglement-assisted quantum  convolutional codes (EA-QCCs) \cite{WB09,WB10}.
An EA-QCC is defined by a constraint code, which is an EA stabilizer code~\cite{BDM06,BDM062}.
The dual code of a stabilizer code is an EA stabilizer code~\cite{LBW13} and their  MacWilliams identity exist~\cite{SL96,Rains96b,AL99}.
Our notion of duality follows the normal factor graph duality theorem 
\cite{Forney01,Forney11,MK05,ABM11}.
Then our classical treatment directly paves the way for the establishment of the MacWilliams identity for EA-QCCs.


This paper is organized as follows. We introduce the Dirac bra-ket notations and basics of linear block codes and the
MacWilliams identity in the next section, so that the materials in this paper are self-complement.
 In Sec. III, we first discuss
classical convolutional codes in the viewpoint of constraints codes and then prove the MacWilliams identities for various
notions of weight enumerations of CCs. 
The MacWilliams identity for EA-QCCs is given in Sec. IV, as well as the definition of the dual code of an EA-QCC. The
conclusion follows in Sec. V.

\section{The MacWilliams Identity for Split Weight Generating Functions in the Dirac Notation} \label{SecII_Pre}
We define notations, review the MacWilliams theorem for orthogonal groups, and introduce the Dirac bra-ket notation in this section.
This representation allows us to easily derive the MacWilliams identity for general split weight generating functions and facilitates the establishment of our results in the following sections. The readers would have a better understanding of the Dirac notation, which is used in both  classical and quantum cases throughout this article.

\subsection{The MacWilliams Identity and the Dirac Notation}

We begin with the Dirac notation, which is used throughout this article.
Let $\ket{\phi}$ denote a vector in a complex Hilbert space $\mathcal{H}$, and its adjoint is denoted by $\bra{\phi}=\ket{\phi}^{\dag}$.
For $\ket{\phi_1},\ket{\phi_2}\in \mathcal{H}$, their inner and  outer products are denoted by  $\braket{\phi_1}{\phi_2}$ and $\ket{\phi_1}\bra{\phi_2}$, respectively.

Suppose $\cV$ is a vector space of dimension $n$ over a finite field $\mathbb{F}_q$, where $q=p^m$ is a prime power, i.e., $\cV\equiv \mathbb{F}_q^n$.
We define a complex Hilbert space  $\mathcal{H}_\mathcal{V}$, corresponding to $\cV$, with an orthonormal basis $$\{\ket{{  \bfv}}: \bfv\in \cV\}.$$
 This means that
$\braket{{  \bfv}}{{  \bfv}'}=\delta_{{  \bfv},{  \bfv}'},$ where $\delta_{{  \bfv},{  \bfv}'}$ is the Kronecker delta function.
Thus $\cH_\cV$ has dimension $q^n$.
Note that the bold-faced letter $\bfv$ is used to denote a vector in $\cV$, while $\ket{\bfv}$ is a vector in $\cH_{\cV}$.
We will define weight enumeration of a subset of $\cV$ as a linear combination of orthonormal vectors in $\cH_{\cV}$, rather than a multivariate polynomial commonly used in the literature (see, e.g., Ref.~\cite{MS77}).
\begin{definition}\label{def:EWGF}
The \emph{exact weight generating function} (EWGF) $g^E_{\mathcal{C}}$ of a set $\mathcal{C} \subset \mathcal{V}$ is defined as
\begin{align} g^E_{\mathcal{C}}=\sum_{\bfv\in \mathcal{C}}\ket{\bfv} \in\cH_{\cV}.\end{align}
\end{definition}

\be
An  $[n,k,d]_q$ linear block code over $\mathbb{F}_q$ is a $k$-dimensional  subspace of $\cV$.
Suppose $\cC$ is the $[7,3]_2$ simplex code with codewords
$0000000$,
$0110011$,
$1010101$,
$1100110$,
$0001111$,
$0111100$,
$1011010$,  and $1101001$ \cite{MS77}.
Then $g^E_{\mathcal{C}}=\ket{0000000}+\ket{0110011}+\ket{1010101}+\ket{1100110}+\ket{0001111}+\ket{0111100}+\ket{1011010}+\ket{1101001}$.
\eep


Let $\langle \ , \ \rangle_{\cV}: (\cV, \cV)\rightarrow \mathbb{F}_p$ denote an inner product in $\cV$. We may sometimes omit the subscript  when the underlying vector space is clear from the context.
The dual set of $\cC$ in $\cV$~is
\begin{align}
\cC^{\perp}=\{\bfv'\in \cV: \langle \bfv',\bfv \rangle_{\cV}=0, \forall \bfv\in \cC \}.
\end{align}
Let $\mathcal{F}_{\mathcal{V}}$ be a Fourier transform operator defined by
\begin{align} \label{eq:Fourier}
\mathcal{F}_{\mathcal{V}}= \sum_{v'\in{\mathcal{V}}}\sum_{v\in\mathcal{V}} \omega^{\langle v',v\rangle} \ket{v'}\bra{v}
\end{align}
and $\omega=e^{2\pi i/p}$ is a primitive complex $p$-th root of unity.
(Note that $\cF_{\cV}$ is not normalized: $\cF^{\dag}_{\cV}\cF_{\cV}=|\cV|\mathbb{I}$, where $|\cV|=q^n$ and $\mathbb{I}$ is the identity operator.)
The  MacWilliams identity connects the weight enumerators of an \emph{additively closed} set and its dual~\cite{MS77}.
It can be rephrased as a Fourier transform in the Dirac notation as in the following theorem.
This idea has been used in the construction of quantum Calderbank-Shor-Steane (CSS) codes \cite{CS96,Ste96},
where $g_\cC^E$, up to a normalization factor, is the logical zero state of the CSS code, defined by a classical linear dual-containing code $\cC$.
\bt\label{thm:MacWilliamsIdentity_EWGF}
Suppose $\mathcal{C}$ is an {additively closed} subset of $\mathcal{V}$ with an EWGF $g_{\mathcal{C}}^E=\sum_{v\in \mathcal{C}}\ket{v}$.
Then the EWGF of its dual set $\mathcal{C}^{\perp}$  is
\begin{align}
g^E_{\mathcal{C}^{\perp}} = \frac{1}{|\mathcal{C}|} \mathcal{F}_{\mathcal{V}} \ g_{\mathcal{C}}^E. \label{eq:MI for EWGF}
\end{align}
\et
The proof is straightforward (see~\cite{LH14a}). It is natural to generalize this result to the direct product of spaces.
Suppose $\cW$ is a vector space of dimension $c$ over $\mathbb{F}_{q'}$, where $q'=p^{m'}$. (Note that $m'$ may or may not equal to $m$.)
We can form a product vector space $\cV\times \cW$ with  an inner product 
\begin{align}
\langle (\bfv_1:\bfw_1), (\bfv_2:\bfw_2) \rangle_{\cV\times \cW}= \langle \bfv_1,\bfv_2 \rangle_{\cV} +\langle \bfw_1, \bfw_2 \rangle_\cW \label{eq:product_space_inner_product}
\end{align}
for $\bfv_1,\bfv_2\in \cV$ and $\bfw_1,\bfw_2\in \cW$,
where the addition is in $\mF_p$.
{We use the notation $(\bfa:\bfb)$ to denote the concatenation of two vectors $\bfa$ and $\bfb$.}
 Thus we can define a tensor product space $\cH_{\cV\times \cW}=\cH_\mathcal{V}\otimes \cH_\mathcal{W}$, which is spanned by $$\{\ket{\bfv}\otimes \ket{\bfw}: \bfv\in \mathcal{V},{\bfw}\in \mathcal{W}\}.$$
The tensor product $\otimes$
of  two matrices $A$  and   $B$  is defined as
\[
A\otimes B =
\begin{bmatrix}
A_{1,1} B & A_{1,2}B &\cdots &A_{1,t} B\\
A_{2,1} B & A_{2,2}B &\cdots &A_{2,t} B\\
\vdots&\vdots &\ddots&\vdots\\
A_{s,1} B & A_{s,2}B &\cdots &A_{s,t} B\\
\end{bmatrix},
\]
where $A=[A_{i,j}]$ is of dimensions $s\times t$.
Likewise, we can define a Fourier transform operator $\cF_{\cV\times \cW}$ on $\cH_\cV\otimes \cH_\cW$:
\[
\mathcal{F}_{\mathcal{V}\times \cW}= \sum_{(\bfv':\bfw')\in{\mathcal{V}\times \cW}\atop (\bfv:\bfw)\in\mathcal{V}\times \cW} \omega^{\langle (\bfv':\bfw'),(\bfv:\bfw)\rangle} \ket{\bfv'}\otimes \ket{\bfw'}\bra{\bfv}\otimes \bra{\bfw}.
\]
By replacing $\cV$ with $\cV\times \cW$ in Theorem \ref{thm:MacWilliamsIdentity_EWGF}, we can immediately obtain the following corollary.
\bc\label{cor:tensor MI_EWGF}
Suppose $\mathcal{C}$ is an additively closed subset of $\mathcal{V}\times
\cW$ with EWGF $g_{\mathcal{C}}^E=\sum_{v\in \mathcal{C}}\ket{v}$.
Then the EWGF of its dual set $\mathcal{C}^{\perp}$  is
\begin{align}
g^E_{\mathcal{C}^{\perp}} = \frac{1}{|\mathcal{C}|} \mathcal{F}_{\mathcal{V}\times\cW} \ g_{\mathcal{C}}^E. \label{eq:tensor MI for EWGF}
\end{align}
\ec
Recently a type of data-and-syndrome error-correcting stabilizer codes are introduced~\cite{ALB16}, where codes are defined over $\mathbb{F}_4^n\times \mathbb{F}_2^m$.
Thus Corollary \ref{cor:tensor MI_EWGF} can be applied to these codes.

The Fourier transform operator has a nice property that it can be decomposed as a tensor product of Fourier transform operators on its components as in the following lemma.
\bl \label{lemma:fourier transform}
\[
\mathcal{F}_{\mathcal{V}\times \cW}= \mathcal{F}_{\mathcal{V}} \otimes \mathcal{F}_{\mathcal{W}}.
\]
In particular, $\cF_{\mathbb{F}_q^n}= \cF_{\mathbb{F}_q}^{\otimes n}$.
\el

\subsection{General Split Weight Generating Functions}
In this {subsection} we first define the notion of \emph{split weight generating function}  defined by a linear functional.
A linear functional that maps $\cH_{\mF_q}$ to the complex numbers $\mC$ can be written as
$$
\sum_{\alpha\in \mF_q}  x_{\alpha}\bra{\alpha},
$$
where $x_{\alpha}\in \mC$.
\bd \label{def:PWE}
A \emph{split weight generating function} (SWGF) of   $\cC \subset \mathbb{F}_q^n $, defined by a linear functional  $\gamma: \cH_{\mathbb{F}_q^n}\rightarrow \mathbb{C}$ is
\begin{align}
g_{\cC}(\gamma)=\left(\bigotimes_{j=1}^n \gamma_j \right) g^{E}_{\cC},
\end{align}
where $\gamma = \bigotimes_{j=1}^n \gamma_j$ with $\gamma_j:\cH_{\mathbb{F}_q}\rightarrow \mathbb{C}$.
\ed

The MacWilliams identity holds for the SWGFs of an additively closed set $\cC$ and its dual, and can be easily obtained as an application of Theorem~\ref{thm:MacWilliamsIdentity_EWGF}.
\bt \label{thm:PWE_MI}
The MacWilliams identity  for the SWEs, defined by $\bigotimes_j \gamma_j$, of an additively closed set $\cC\subset \mathbb{F}_q^n$ and its dual is
\begin{align}
g_{\mathcal{C}^{\perp}}\left(\bigotimes_j \gamma_j\right) =& \frac{1}{|\mathcal{C}|} g_\mathcal{C}\left(\bigotimes_j \left(\gamma_j\cF_{\mF_q}\right)\right). \label{eq:MacWilliams Identity}
\end{align}
\et
\begin{proof}

\begin{align*}
&\frac{1}{|\cC|}g_\mathcal{C}\left(\bigotimes_j \left(\gamma_j\cF_{\mF_q}\right)\right)\stackrel{(a)}{=} \frac{1}{|\cC|}\bigotimes_j \left(\gamma_j\cF_{\mF_q}\right) g_{\cC}^{E}\\
\stackrel{(b)}{=}& \left(\bigotimes_j \gamma_j \right) \frac{1}{|\cC|}  \cF_{\mathbb{F}_q^n} g_{\cC}^E
\stackrel{(c)}{=} \left(\bigotimes_j \gamma_j \right)  g_{\mathcal{C}^{\perp}}^E
\stackrel{(d)}{=} g_{\mathcal{C}^{\perp}}\left(\bigotimes_j \gamma_j\right),
\end{align*}
where $(a)$, $(d)$ are by Definition \ref{def:PWE}; $(b)$  is from Lemma \ref{lemma:fourier transform}; and $(c)$
follows from Theorem \ref{thm:MacWilliamsIdentity_EWGF}.

\end{proof}

Various notions of weight enumeration can be defined by an appropriate $\gamma$,
and their corresponding MacWilliams identities are an direct application of Theorem~\ref{thm:PWE_MI}.
The Hamming weight  of a vector $\bfv\in\mathbb{F}_q^n$ is the number of its nonzero components and {is denoted by $\wt{\bfv}$}.
Two related weight generating functions are as follows.
\be
(Hamming weight generating function (HWGF))
Define a linear functional 
\begin{align}
\label{eq:gamma Hamming}
\gamma_\text{H}(y)= \bra{0}+\sum_{\alpha \in \mF_q\setminus \{0\}} y\bra{\alpha},
\end{align}
where $y\in \mC$.
The HWGF of a code $\cC \subset \mF_q^n$ is
\begin{align}
g_{\mathcal{C}}^H(y)\triangleq&\sum_{i=0}^n A_i y^i=  g_{\cC}(\gamma^{\otimes n}_{\text{H}}(y)), 
\end{align}
where $A_i$ is the number of codewords in $\cC$ of weight $i$.
By Theorem~\ref{thm:PWE_MI}, we have
\begin{align}
g_{\mathcal{C}^{\perp}}^H(y) =& \frac{(1+(q-1)y)^n}{|\mathcal{C}|} g_\mathcal{C}^H\left(\frac{1-y}{1+(q-1)y}\right). \label{eq:MacWilliams Identity}
\end{align}
\ee

\be \label{ex:ip}
(Input-parity weight generating function (IPWGF))
Consider an $[n,k]_q$ linear block code $\cC$ with a systematic generator matrix $G=\begin{pmatrix}I_k &A\end{pmatrix}$.
The first $k$ symbols of a codeword in $\cC$ are  information symbols, while the remaining  $n-k$ symbols are parity symbols.
We say a codeword $\bfc=(\bfc_I:\bfc_P)\in \cC$, where $\bfc_I\in \mF_q^k$ and $\bfc_P\in \mF_q^{n-k}$, has logical weight $i$, parity weight $r$, and output weight $o$ if $\wt{\bfc_I}=i$, $\wt{\bfc_P}=r$,  and $\wt{\bfc}=o$.
The IPWGF of $\cC$ is
\begin{align}
g_{\mathcal{C}}^{IP}(x,y) \triangleq & \sum_{i=0}^{k}\sum_{j=0}^{n-k} A_{i,j} x^i y^j= g_{\cC}\left( \gamma_H^{\otimes k}(x)\otimes \gamma_H^{\otimes n-k}(y)\right),  \label{eq:12}
\end{align}
where $A_{i,j}$ is the number of codewords in $\cC$ of logical weight $i$ and parity weight $j$.

%


The $[n,n-k]_q$ dual code $\cC^{\perp}$  has a systematic generator matrix $H=\begin{pmatrix}-A^T &I_{n-k} \end{pmatrix}$.
In this case, the last $n-k$ symbols of a codeword in $\cC^{\perp}$ are   information symbols. 
Thus we have to switch the indeterminates  $x$ and $y$ in (\ref{eq:12})  when applying Theorem~\ref{thm:PWE_MI}.
After some calculations, we have 
\begin{align}
g_{\mathcal{C}^{\perp}}^{IP}(x,y ) 
=&\frac{(1+(q-1)y)^{k}(1+(q-1)x)^{n-k}}{|\mathcal{C}|} \notag\\
 &\cdot g_\mathcal{C}^{IP}\left( \frac{1-y}{1+(q-1)y},\frac{1-x}{1+(q-1)x}\right). \label{eq:MacWilliams Identity IPWGF}
\end{align}
The input-output weight generating function (IOWGF) of $\cC$ can be similarly defined  as
\begin{align}
g_{\mathcal{C}}^{IO}(x,y) \triangleq & \sum_{i=0}^{n}\sum_{j=0}^{k} B_{i,j} x^i y^j = g_{\mathcal{C}}^{IP}(xy,y), \label{eq:14}
\end{align}
where $B_{i,j}$ is the number of vectors in $\cC$ of  output weight $i$ and logical weight $j$.
From (\ref{eq:MacWilliams Identity IPWGF}), we have
\begin{align}
g_{\mathcal{C}^{\perp}}^{IO}(x,y ) 
=&\frac{(1+(q-1)y)^{k}(1+(q-1)xy)^{n-k}}{|\mathcal{C}|} \notag\\
 &\cdot g_\mathcal{C}^{IP}\left( \frac{1-y}{1+(q-1)y},\frac{1-xy}{1+(q-1)xy}\right). \label{eq:MacWilliams Identity IOWGF}
\end{align}
{The IPWGF  has been introduced in the analysis of error performance~\cite{LKY04,CL09},
with the help of its MacWilliams identity. }
We will derive similar equations for convolutional codes in the next section,
which can be used in error performance analysis.

\ee


\section{Convolutional Codes and the MacWilliams Identity} \label{sec:MI for CC}


\subsection{Constraint Code, Weight Adjacency Matrix, and the MacWilliams Identity}
Let $\cal C$ be an $(n,k,m)$ convolutional code over $\F_q$ with polynomial generator matrix $G(D) \in {\bf M}_{k \x n}\left( \F_q[D]\right)$ for some indeterminate $D$ and
  overall constraint length $m$.  Then $\cal C$ is a rank-$k$ submodule of $\left( \F_q[D] \right)^n$ given by 
$$
{\cal C} = \left\{ \bfu(D)  G(D): \bfu(D) \in \left( \F_q[D] \right)^k \right\}.
$$
Assume further that $G(D)$ is a \emph{minimal} encoder of $\cC$
with $m$ memory symbols.
That is, $G(D)$ is a basic encoder with a right inverse and its overall constraint length $m$ is  equal to the maximum degree of its $k\times k$ subdeterminants \cite{Forney70}.

As shown in Fig.~\ref{fig:convolutional encoder}, $G(D)$ outputs $n$ symbols from $k$ logical input symbols at each time step.
\begin{figure}
\vspace{-0.5cm}
\centerline{
    \begin{tikzpicture}[scale=0.6][very thick]
    \fontsize{7pt}{1} 
    \tikzstyle{halfnode} = [draw,fill=white,shape= underline,minimum size=1.0em]
    \tikzstyle{checknode} = [draw,fill=blue!10,shape= rectangle,minimum height=4em, minimum width=2em]
    \tikzstyle{variablenode} = [draw,fill=white, shape=circle,minimum size=0.8em]
    \node (p1) at (2,2.5) {$\bfp_{j-1}$} ;
    \node (p2) at (4,1.5) {$\bfp_{j}$} ;
    \node (p3) at (6,0.5) {$\bfp_{j+1}$} ;
    \node (w1) at (-2,2.5) {$\bfw_{j-1}$} ;
    \node (w2) at (6,-0.5) {$\bfw_{j+2}$} ;
    \node (w3) at (1,1.7) {$\bfw_{j}$} ;
    \node (w4) at (3,0.7) {$\bfw_{j+1}$} ;
    \node (l1) at (-2,1.5) {$\bfu_{j-1}$} ;
    \node (l2) at (0,0.5) {$\bfu_{j}$} ;
    \node (l3) at (2,-0.5) {$\bfu_{j+1}$} ;
    \node (label51) at (-1,3.5) {$\ddots$} ;
    \node (label52) at (5,-1) {$\ddots$} ;
      \draw (p1)-- ++(-1.65,0)  (p2)-- ++(-1.65,0)  (p3)-- ++(-1.65,0) ;
     \draw (w1)-- ++(1.65,0) (2,1.5)-- ++(-1.65,0)  (4,0.5)-- ++(-1.65,0)  (w2)-- ++(-1.65,0) ;
     \draw (l1) -- ++(1.65,0) (l2) -- ++(1.65,0) (l3) -- ++(1.65,0)  ;
    \node[checknode] (cn1) at (0,2) {$T$};
    \node[checknode] (cn2) at (2,1) {$T$};
    \node[checknode] (cn3) at (4,0) {$T$};
   \end{tikzpicture}
 }
\vspace{-0.5cm}
  \caption{
Circuit diagram of a convolutional encoder with a seed transformation matrix $T$.
  }\label{fig:convolutional encoder}
  \vspace{-0.5cm}
\end{figure}
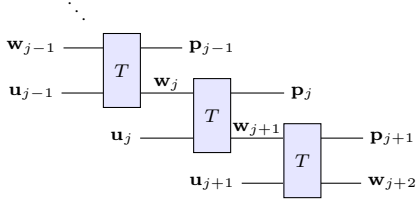
Let  $\bfu_j$ and $\bfp_j$ denote the logical input and physical output symbols at time $j$, respectively,  and let
$\bfw_{j}$ and $\bfw_{j+1}$ denote the $m$ memory symbols before and after  encoding at time $j$, respectively.
Suppose $(A,B,C,E)$ is a realization of $G(D)$~\cite{Forney73} so that
\begin{align}
\bfw_{j+1}=& \bfw_j A+ \bfu_j B, \label{eq:state1}\\
\bfp_j=& \bfw_j C + \bfu_j E, \label{eq:state2}\\
G(D)=&B(D^{-1}I_{m}-A)^{-1}C+E, \label{eq:G_from_ABCE}
 \end{align}
 where $I_{m}$ is the $m\times m$ identity matrix and $A,$$B$,$C$,$D$ are matrices over $\mF_q$ of appropriate dimensions.
Let
\begin{align}
T=\left(\begin{array}{cc} C & A\\  E&B \end{array}\right).  \label{eq:T}
\end{align}
Then (\ref{eq:state1}) and  (\ref{eq:state2}) can be written as
\begin{align} \label{eq:seed transformation encoder}
(\bfp_j:\bfw_{j+1})=(\bfw_{j}:\bfu_j)T.
\end{align}
Thus $T$ is called a \emph{seed transformation matrix}. We will see later that quantum convolutional codes are similarly  defined by a\emph{ seed transformation unitary operator}~\cite{PTO09}.
%
%
%

Forney introduced the idea of \emph{constraint codes}~\cite{Forney11}, which can be used to define
the dual code of a CC.

\bd \label{def:constraint code}
The \emph{constraint  codes} $\mathcal{C}_{(j)}$ of $\cC$ is a $[2m+n,m+k]$ linear block code over $\mathbb{F}_q$ given by
\begin{align}
\cC_{(j)}=&\{ (\bfw_{j}:\bfp_j:\bfw_{j+1})\in \mF_q^{n+2m}: \notag\\
&(\bfp_j:\bfw_{j+1})=(\bfw_{j}:\bfu_j)T,\text{ for }\bfw_j\in \mF_q^m, \bfu_j\in \mF_q^k  \},
\end{align}
where $T$ is a seed transformation matrix of $\cC$ defined in (\ref{eq:T}).
\ed
\noindent
Alternatively, the constraint code $\mathcal{C}_{(j)}$ has a generator matrix
\begin{equation}
\label{eq:TG}
\widetilde{G}=\left(\begin{array}{c|cc}I_{m}&C&A\\ \hline 0& E&B \end{array}\right),
\end{equation}
where $I_m$ is because the input $\bfw_j$ appears in the codeword,  like a systematic encoder; $A,B,C$ and $E$  follow  (\ref{eq:state1}) and  (\ref{eq:state2}).

In general the constraint codes ${\cal C}_{(j)}$ are fixed for all $j$ until termination, where subcodes of ${\cal C}_{(j)}$ are used to have all paths converge to the final zero state.

The dual code ${\cal C}^\perp$ of $\cal C$ is defined\footnote{ The duality is defined by $G(D)H(D)^T=0$ in \cite{GLS09}.
Although not addressed, the duality in \cite{Forney11} is  $G(D)H(D^{-1})^{\bot} = 0$, which is what we adopted in this paper.
This explains the additional transpose on the WAM in the formula for MacWilliams identity in \cite{GLS09}.
 One could also prove the MacWilliams identity from the duality $G(D)H(D)^{\top} = 0$ by modifying Lemma~\ref{def_dual_CCC} appropriately
 to obtain the identity by Gluesing-Luerssen and Schneider in \cite{GLS08,GLS09}.  However, these points are not addressed in \cite{Forney11} (and perhaps unlikely to be addressed by the method in \cite{Forney11}).} as
\begin{align*}
{\cal C}^\perp \ :=& \ \left\{ \bfc'(D) \in \left( \F_q[D] \right)^n \ : \ \bfc(D)\bfc'(D^{-1})^T  = 0,\right.  \\
&\left.\forall \tm{ $\bfc(D) \in {\cal C}$} \right\}.
\end{align*}
The dual code $\cC^{\perp}$ can be generated by a minimal encoder with the same overall constraint length $m$ of $\cC$~\cite{Forney70},
which is important in the following lemma
on the  constraint codes for the dual of a CC.
\bl\label{def_dual_CCC}
Suppose $\cC_{(j)}$ is the constraint code of a minimal encoder $G(D)$.
Let \begin{align}
\widehat{\cal C}_{(j)} =& \left\{ (\bfw_j':\bfp_j':\bfw_{j+1}')  \in \mF_q^{n+2m} \ : \bfw_j'\bfw_j^{T} +\bfp_j'\bfp_j^T \right. \notag\\
&\left.- \bfw_{j+1}'\bfw_{j+1}  =0, \forall \tm{$(\bfw_j:\bfp_j:\bfw_{j+1})  \in {\cal C}_{(j)}$}\right\}.
\end{align}
Then  $\widehat{\cC}_{(j)}$ is a $[2m+n, m+n-k]$ constraint code  of the dual code $\cC^{\perp}$.
In other words, $\widehat{\cC}_{(j)}$
has an $(m+n-k)\times (2m+n)$ generator matrix $\widetilde{H}$ so that $\widetilde{H}'=\widetilde{H} \diag(I_m,I_n,-I_m)$ is a parity-check matrix of $\mathcal{C}_{(j)}$.

\el

\begin{proof} 
 Since $G(D)$ is a minimal encoder, the dual encoder has the same overall constraint length as $G(D)$.
Suppose  $G(D)$ and $H(D)$ are the polynomial generator matrices of $\cC$ and $\cC^{\perp}$, respectively.
Let $\widetilde{H}=\left(\begin{array}{c|cc}I_{m}&C'&A'\\ \hline 0& E'&B' \end{array}\right)$ and $\widetilde{G}$ be as given in (\ref{eq:TG}).
As in (\ref{eq:G_from_ABCE}), the polynomial generator matrix of $\cC^{\perp}$ is $$H(D)=B'(D^{-1}I_{m}-A')^{-1}C'+E'.$$
Since $\widetilde{G}\widetilde{H}'^T=0$, we have
\begin{align}
I_{m}+CC'^T-AA'^T=&0 \label{eq:orthogonality convolutional matrix1}\\
EE'^T-BB'^T=&0 \label{eq:orthogonality convolutional matrix2}\\
CE'^T-AB'^T=&0 \label{eq:orthogonality convolutional matrix3}\\
EC'^T-BA'^T=&0.\label{eq:orthogonality convolutional matrix4}
\end{align}
Since  a minimal encoder is non-catastrophic and delay-free~\cite{Forney70}, it is sufficient to prove the duality by showing that $G(D)H(D^{-1})^T=0$ as follows:
\begin{align*}
&G(D)H(D^{-1})^T\\
=&\left( B(D^{-1}I_m-A)^{-1}C+E  \right)\left( B'(DI_m-A')^{-1}C'+E'  \right)^T\\
\stackrel{(a)}{=}&
B\left( (D^{-1}I_m-A)^{-1}CC'^T((DI_m-A')^{-1})^T\right.\\&\left.+(D^{-1}I_m-A)^{-1}A+A'^T((DI_m-A')^{-1})^T+I_m\right)B'^T\\
\stackrel{(b)}{=}&B\left( (D^{-1}I_m-A)^{-1}AA'^T((DI_m-A')^{-1})^T\right.\\
&+(D^{-1}I_m-A)^{-1}A+A'^T((DI_m-A')^{-1})^T+I_m\\
&\left. -(D^{-1}I_m-A)^{-1}((DI_m-A')^{-1})^T\right)B'^T\\
\stackrel{(c)}{=}&0,
\end{align*}
where $(a)$ follows from (\ref{eq:orthogonality convolutional matrix2}), (\ref{eq:orthogonality convolutional matrix3}), and (\ref{eq:orthogonality convolutional matrix4});
$(b)$ follows from (\ref{eq:orthogonality convolutional matrix1});
$(c)$ is obtained by sequentially applying the following two equalities:
\begin{align*}
&I_{m}+(D^{-1}I_{m}-A)^{-1}A= D^{-1}(D^{-1}I_{m}-A)^{-1}\\
&I_{m}+A'^T((DI_{m}-A')^{-1})^T= D((DI_{m}-A')^{-1})^T.
\end{align*}

\end{proof}

Herein, we clarify the notations of $\widehat{\cC}_{(j)}$ and $\cC_{(j)}^\perp$. We denote by  $\widehat{\cC}_{(j)}$ the constraint code  of the dual CC $\cC^\perp$,
and denote by $\cC_{(j)}^\perp$  the dual of the constraint code $\cC_{(j)}$ of $\cC$.
Specifically, if $(\bfw_{j}:\bfp_j:\bfw_{j+1})=(\bfw_{j}:\bfu_j)\widetilde{H}$ is a codeword of $\widehat{{\cC}}_{(j)}$ for some $ \bfu_j\in\mF_q^{n-k}$,
then  $(\bfw_{j}:\bfp_j: -\bfw_{j+1})$ is a codeword of $\cC_{(j)}^\perp$. 
It is straightforward to prove the following lemma.
\begin{lemma}
\label{lem_EWGFdc}
The EWGF $g_{\widehat{\cC}_{(j)}}^E$ of   $\widehat{\cC}_{(j)}$ is related to the EWGF $g_{\cC_{(j)}^\perp}^E$ of $\cC_{(j)}^\perp$ by
\[
g_{\widehat{\mathcal{C}}_{(j)}}^E= (I^{\otimes m+n}\otimes \Pi) g_{{\mathcal{C}^{\perp}_{(j)}}}^E,
\]
where  $\Pi=\sum_{\bfw\in\mF_q^m} \ket{\bfw}\bra{-\bfw}$ is a permutation on the $m$ memory symbol states
and $I^{\otimes m+n}$ is the identity operator on the first $m+n$ symbol states.
\end{lemma}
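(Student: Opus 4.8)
The plan is to reduce the claimed operator identity to a relabeling of the summation index in the exact weight generating function, using the explicit codeword correspondence between $\widehat{\cC}_{(j)}$ and $\cC_{(j)}^\perp$ recorded in the paragraph immediately preceding the statement: a triple $(\bfw_j:\bfp_j:\bfw_{j+1})$ lies in $\widehat{\cC}_{(j)}$ precisely when $(\bfw_j:\bfp_j:-\bfw_{j+1})$ lies in $\cC_{(j)}^\perp$. First I would pin down how $\Pi=\sum_{\bfw\in\cA_m}\ket{\bfw}\bra{-\bfw}$ acts on a single memory register. For $\bfv\in\cA_m$ one computes $\Pi\ket{\bfv}=\sum_{\bfw\in\cA_m}\ket{\bfw}\braket{-\bfw}{\bfv}=\ket{-\bfv}$, since $\braket{-\bfw}{\bfv}=\delta_{-\bfw,\bfv}$ vanishes unless $\bfw=-\bfv$. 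Consequently $I^{\otimes m+n}\otimes\Pi$ carries a basis vector $\ket{\bfw_j}\otimes\ket{\bfp_j}\otimes\ket{\bfw_{j+1}}$ of $\cH_{\cA_{2m+n}}$ to $\ket{\bfw_j}\otimes\ket{\bfp_j}\otimes\ket{-\bfw_{j+1}}$; it negates only the final memory block and fixes the first $m+n$ registers.

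Next I would use the correspondence to write the EWGF of $\cC_{(j)}^\perp$ as a sum indexed by the codewords of $\widehat{\cC}_{(j)}$, namely $g^E_{\cC_{(j)}^\perp}=\sum_{(\bfw_j:\bfp_j:\bfw_{j+1})\in\widehat{\cC}_{(j)}}\ket{\bfw_j}\otimes\ket{\bfp_j}\otimes\ket{-\bfw_{j+1}}$. Applying $I^{\otimes m+n}\otimes\Pi$ term by term and using $\Pi\ket{-\bfw_{j+1}}=\ket{\bfw_{j+1}}$ (the sign-flip being an involution) then yields $\sum_{(\bfw_j:\bfp_j:\bfw_{j+1})\in\widehat{\cC}_{(j)}}\ket{\bfw_j}\otimes\ket{\bfp_j}\otimes\ket{\bfw_{j+1}}$, which is exactly $g^E_{\widehat{\cC}_{(j)}}$. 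This is the asserted identity.

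The one point deserving care --- and the closest thing to an obstacle in what is otherwise pure bookkeeping --- is confirming that the sign-flip $(\bfw_j:\bfp_j:\bfw_{j+1})\mapsto(\bfw_j:\bfp_j:-\bfw_{j+1})$ really is a bijection of $\widehat{\cC}_{(j)}$ onto $\cC_{(j)}^\perp$, so that the reindexing neither omits nor double-counts a codeword. I would observe that this map is a linear involution of $\cA_{2m+n}$ and that both codes have the same dimension $m+n-k$ (the former by Definition~\ref{def_dual_CCC}, the latter as the dual of the $[2m+n,m+k]$ code $\cC_{(j)}$); combined with the one-sided inclusion already established from Definition~\ref{def_dual_CCC}, this forces the correspondence to be a bijection and validates the index substitution above.
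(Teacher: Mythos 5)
Your proof is correct and is essentially the argument the paper intends: the lemma is stated as an immediate consequence of the preceding codeword correspondence, and you make this precise by computing $\Pi\ket{\bfv}=\ket{-\bfv}$ and reindexing the sum $g^E_{\cC_{(j)}^\perp}$ over codewords of $\widehat{\cC}_{(j)}$. Your dimension-counting step to justify the bijection is sound but not actually needed, since the orthogonality condition in Definition~\ref{def_dual_CCC} is itself an equivalence (membership of $(\bfw_j:\bfp_j:\bfw_{j+1})$ in $\widehat{\cC}_{(j)}$ holds if and only if $(\bfw_j:\bfp_j:-\bfw_{j+1})\in\cC_{(j)}^\perp$), so the sign-flip involution restricts to a bijection between the two codes directly.
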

\noindent Note that in the case that $q$ is a power of 2, $\Pi$ is trivial and  we have $\widehat{\cC}_{(j)}=\cC^\perp_{(j)}$.

The weight adjacency matrix of $\cC$ is the weight enumeration of $\mathcal{C}_{(j)}$ in matrix form.
\bd \label{def_WAM}
The \emph{weight adjacency matrix} (WAM) $\Lambda_{\cC_{(j)}}(y)$ of a CC with a constraint code $\mathcal{C}_{(j)}$ is the matrix whose
($\bfw_j$, $\bfw_{j+1}$) entry is a HWGF of the output symbols of $\cC_{(j)}$ with the memory symbols $\bfw_j$ and $\bfw_{j+1}$ before and after time $j$, respectively.
That is,
\begin{align*}
\bra{\bfw_j}\Lambda_{\cC_{(j)}}(y)\ket{\bfw_{j+1}}&\equiv( \bra{\bfw_j}\otimes \gamma_H^{\otimes n}(y) \otimes \bra{\bfw_{j+1}}) g^E_{\cC_{(j)}}\\
&=\gamma_H^{\otimes n}(y) \left( \sum_{\bfp_j\in \mF_q^n: \atop{(\bfw_j:\bfp_j:\bfw_{j+1})\in \mathcal{C}_{(j)}}} \ket{\bfp_j}\right).
\end{align*}
\ed

Now we are ready to derive the MacWilliams identity for convolutional codes  \cite{GLS08,GLS09}.
\bt \label{thm:MacWilliamsIdentity_WAM}
Suppose the WAM from a minimal encoder of an $(n,k,m)$ CC $\cC$  over $\mathbb{F}_q$ is $\Lambda_{\cC_{(j)}}(x,y)$.
Then the WAM for  $\widehat{\cC}_{(j)}$ is given by
\begin{align}
\Lambda_{\widehat{\cC}_{(j)}}(y)&= \frac{(1+(q-1)y)^n}{q^{m+k}} \mathcal{F}_{\mF_q^m} \Lambda_{\cC_{(j)}}\left(\frac{1-y}{ 1+(q-1)y}\right) \mathcal{F}_{\mF_q^{m}}^{\dag}. \label{eq:MI for WAM}
\end{align}
\et
\begin{proof}
From Lemma~\ref{lem_EWGFdc}, Theorem~\ref{thm:MacWilliamsIdentity_EWGF}, and Lemma~\ref{lemma:fourier transform},
\begin{align*}
g_{\widehat{\mathcal{C}}_{(j)}}^E
=& \frac{1}{q^{m+k}} \sum_{\bfw_j\in \mF_q^m}\sum_{\bfw_{j+1}\in \mF_q^m} \sum_{\bfp_j\in\mF_q^n: \atop{(\bfw_j:\bfp_j:\bfw_{j+1})\in \mathcal{C}_{(j)}}} \\
&(\cF_{\mF_q^m} \ket{\bfw_j}) (\cF_{\mF_q^n} \ket{\bfp_j}) (\Pi\cF_{\mF_q^m} \ket{\bfw_{j+1}}).
\end{align*}
Thus
\begin{align*}
&\bra{\bfw_j'}\Lambda_{\widehat{\cC}_{(j)}}(y)\ket{\bfw_{j+1}'}=(\bra{\bfw_j'}\otimes \gamma_H^{\otimes n}(y) \otimes \bra{\bfw_{j+1}'}) g^E_{\widehat{\cC}_{(j)}}\\
 =& \frac{1}{q^{m+k}} \sum_{\bfw_j} \sum_{\bfw_{j+1}}  \bra{\bfw_{j}'}\cF_{\mF_q^m} \ket{\bfw_j} \\
  &\cdot\left( \sum_{\bfp_j\in \mF_q^n: \atop{(\bfw_j:\bfp_j:\bfw_{j+1})\in \mathcal{C}_{(j)}}}  \gamma_H^{\otimes n}(y) \cF_{\mF_q^n} \ket{\bfp_j}\right)   \bra{\bfw_{j+1}'}\Pi \cF_{\mF_q^m}\ket{\bfw_{j+1}} \\
 \stackrel{(a)}{=}& \frac{(1+(q-1)y)^n}{q^{m+k}} \sum_{\bfw_j} \sum_{\bfw_{j+1}} \bra{\bfw_{j}'}\cF_{\mF_q^m} \ket{\bfw_j} \\
 &\bra{\bfw_{j}} \Lambda_{\cC_{(j)}}\left(\frac{1-y}{1+(q-1)y}\right)\ket{\bfw_{j+1}}  \bra{\bfw_{j+1}} \cF_{\mF_q^m}^{\dag} \ket{\bfw_{j+1}'} \\
 \stackrel{(b)}{=}& \frac{(1+(q-1)y)^n}{q^{m+k}}  \bra{\bfw_{j}'}\cF_{\mF_q^m}  \Lambda_{\cC_{(j)}}\left(\frac{1-y}{1+(q-1)y}\right) \cF_{\mF_q^m}^{\dag}  \ket{\bfw_{j+1}'},
\end{align*}
where $(a)$ follows from the definition of $\Lambda_{\cC_{(j)}}(y)$ and $\bra{\bfw_{j+1}'}\Pi \cF_{\mF_q^m}  \ket{\bfw_{j+1}}=\bra{\bfw_{j+1}} \cF_{\mF_q^m}^{\dag}  \ket{\bfw_{j+1}'}$;
$(b)$ follows from $\sum_{\bfw_j}\ket{\bfw_j}\bra{\bfw_j}=\sum_{\bfw_{j+1}}\ket{\bfw_{j+1}}\bra{\bfw_{j+1}}=I^{\otimes m}$.
The equation $\bra{\bfw_{j+1}'}\Pi \cF_{\mF_q^m}  \ket{\bfw_{j+1}}$ $= \bra{\bfw_{j+1}} \cF_{\mF_q^m}^{\dag}  \ket{\bfw_{j+1}'}$
can be  verified straightforwardly.
Then (\ref{eq:MI for WAM}) follows directly.
 \end{proof}
Thus we have provided a  more direct and transparent proof than that in \cite{GLS08,GLS09}. 
 Also, this direct proof gives the scalar in the MacWilliams identity for CCs that is not shown  in~\cite{Forney11}.

\textbf{Remark:} the assumption of minimal encoder in this section could be removed by using the componentwise duality definition of constraint codes and the normal factor graph duality theorem~\cite{Forney01,Forney11}.
That is, the normal  graph duality theorem implies the definition of dual constraint codes in Lemma~\ref{def_dual_CCC} without the requirement of minimality.
For the rest, we will assume that the dual constraint codes are determined without the encoder being minimal.

In~\cite{LHL14b}, we have this theorem proved using the usual vector notation, instead of the Dirac notation here.

\subsection{The MacWilliams Identity for the Input-Output Weight Adjacency Matrices }\label{sec_beyondGL}

As Gluesing-Luerssen and Schneider noted in \cite{GLS09}, the input-output weight generating functions are not invariants of a CC, but rather of the encoder.
We will derive the MacWilliams identity for IOWAMs 
for systematic encoders.

Recall that  a seed transformation matrix of an $(n,k,m)$ convolutional code $\cC$ is of the form $T=\left(\begin{array}{cc} C & A\\  E&B \end{array}\right)$
and it defines the constraint code with a generator matrix $\widetilde{G}$  given in (\ref{eq:TG}).
Here we consider the seed transformation matrix of a \emph{systematic encoder}, that is, the matrices $C$ and $E$  are in the following specific form:
$
\left(\begin{array}{c} C \\ E \end{array}\right) = \left( \begin{array}{cc} 0 & C_0 \\ I_k & E_0\end{array}\right),
$
where $C_0$ and $E_0$ are $m\times (n-k)$ and $k\times (n-k)$ matrices, respectively.
We may assume the generator matrix  $\widetilde{G}_S$ of the constraint code corresponding to a systematic encoder  is
\begin{equation}\label{eq:systemG}
\widetilde{G}_S=\left(\begin{array}{c|cc|c}I_{m}&0&C_0&A_0 \\ \hline 0&I_k& E_0&B_0 \end{array}\right).
\end{equation}
Thus $(\bfw_j,\bfp_j,\bfw_{j+1})$ is a codeword of $\cC_{(j)}$ for $\bfu_j\in \mF_q^k$ if
\begin{align*}
\bfw_{j+1}&= \bfw_j A_0+\bfu_j B_0,\\
\bfp_j&=(\bfu_j : \bfw_j C_0+ \bfu_j E_0)
\triangleq \left(\bfp^I:\bfp^P\right),
\end{align*}
where  $\bfp^I=\bfu_j$ and $\bfp^P=  \bfw_j C_0+ \bfu_j E_0$.

\bd \label{def:IP WAM}
The input-parity weight adjacency matrix (IPWAM) of a \emph{systematic} convolutional encoder $\widetilde{G}_S$ is the matrix $\Lambda_{\cC_{(j)}}^{IP}(x,y)$ whose ($\bfw_j$, $\bfw_{j+1}$) entry is an IPWGF of $\cC_{(j)}$ in $x$ and $y$ by $\sum_{i,o} \nu_{i,o} x^i y^o$,
where $\nu_{i,o}$ is the number of $\bfp_j=\left(\bfp^I:\bfp^P\right)\in\mF_q^n$  with $\wt{\bfp^I}=i$ and $\wt{\bfp^P}=o$ so that $(\bfw_j:\bfp^I:\bfp^P:\bfw_{j+1})\in \mathcal{C}_{(j)}$.
That is,
\begin{align*}
&\bra{\bfw_j}\Lambda_{\cC_{(j)}}^{IP}(x,y)\ket{\bfw_{j+1}}\\
=&\sum_{\bfp_j\in \mF_q^n: \atop{(\bfw_j:\bfp_j:\bfw_{j+1})\in \mathcal{C}_{(j)}}}  \gamma_H^{\otimes k}(x)\otimes \gamma_H^{\otimes n-k}(y) \ket{\bfp_j},
\end{align*}
where $\gamma_H$ is defined in (\ref{eq:gamma Hamming}).
\ed

For simplicity, we assume the corresponding systematic encoder $\widetilde{H}_S$ for the constraint code $\widehat{C}_{(j)}$ of $\cC^{\perp}$ is of the form
$
\widetilde{H}_S=\left(\begin{array}{c|cc|c}I_m &C_0'&0& A_0'\\ \hline 0&E_0'& I_{n-k} &B_0' \end{array}\right).
$
Thus $(\bfw_j,\bfp_j,\bfw_{j+1})$ is a codeword of $\widehat{\cC}_{(j)}$ for $\bfu_j\in \mF_q^{n-k}$ if
\begin{align*}
\bfw_{j+1}&= \bfw_j A_0'+\bfu_j B_0',\\
\bfp_j&=( \bfw_j C_0'+ \bfu_j E_0': \bfu_j)\triangleq \left(\bfp^P:\bfp^I\right).
\end{align*}
where  $\bfp^I=\bfu_j$ and $\bfp^P=  \bfw_j C_0'+ \bfu_j E_0'$.

Similarly to the previous development, it is straightforward to have
the following MacWilliams identity for the IPWAMs of a systematic encoder of $\cC$ and its dual.

\bt \label{thm:MacWilliamsIdentity_IO WAM}
Suppose the IPWAM of a systematic minimal encoder of an $(n,k,m)$ CC $\cC$ over $\mathbb{F}_q$  is $\Lambda_{\cC_{(j)}}^{IP}(x,y)$.
Then the IPWAM of its dual encoder of $\cC^{\perp}$  is
\begin{align}
&\Lambda_{\widehat{\cC}_{(j)}}^{IP}(x,y)\notag\\
=&\frac{(1+y)^k(1+x)^{n-k}}{2^m}\cF_{\mathbb{F}_2}^{\otimes m}\Lambda_{\cC_{(j)}}^{IP}(\frac{1-y}{1+y},\frac{1-x}{1+x}) \cF_{\mathbb{F}_2}^{\otimes m}, \label{thm:8_revised}
\end{align} \label{eq:IOWAM}

\et

Remark: Given the systematic encoder $\widetilde{G}_S$ of $\cC$ in (\ref{eq:systemG}), it may be natural to define the corresponding systematic encoder $\widetilde{H}_S$ for the constraint code $\widehat{C}_{(j)}$ of $\cC^{\perp}$
as
\begin{equation*}\label{eq:dualHS}
\widetilde{H}_S=\left(\begin{array}{c|cc|c}C_0^T&E_0^T&I_{n-k}&0\\ \hline A_0^T&B_0^T&0&-I_m \end{array}\right).
\end{equation*}
Then a codeword of $\widehat{C}_{(j)}$ is in the reverse order $(\bfw_{j+1}:\bfp_j:\bfw_{j})=(\bfw_{j+1}:\bfp^P:\bfp^I:\bfw_{j})$.
The MacWilliams identity for the IPWAMs in the above theorem still holds, except that $\Lambda_{\widehat{\cC}_{(j)}}^{IP}(x,y)$ is replaced by its transpose 
in (\ref{eq:IOWAM}).
However, this identity for CCs  would require  the duality $G(D)H(D)^T=0$,
which is inconsistent with our development.
\be \label{ex:ip_WAM}
Consider the constraint code  $\cC_{(j)}$ of an $(n=2,k=1,m=2)$ CC over $\mathbb{F}_2$ with the following generator and parity-check matrices in the systematic form:
\[
\tilde{G}=\left(\begin{array}{c|c|c|c}
10&0&1&01\\
01&0&0&10\\
00&1&1&10\\
\end{array}\right)
\text{ and }
\tilde{H}=\left(\begin{array}{c|c|c|c}
10&1&1&00\\
01&1&0&10\\
10&0&0&01\\
\end{array}\right).
\]
In the case of binary codes, Lemma~\ref{def_dual_CCC} says that the dual constraint code $\hat{\cC}_{(j)}$  is simply the dual code of $\cC_{(j)}$, and is generated by $\tilde{H}$.
The Fourier transform operator over $\mF_2$ is
\[
\cF_{\mathbb{F}_2}=\begin{bmatrix} 1&1\\1&-1\end{bmatrix}=\cF_{\mF_2}^{\dag}.
\]
By Definition~\ref{def:IP WAM},
\begin{align} \label{eq:31}
\Lambda_{\cC_{(j)}}^{IP}(x,y)=& \left(\begin{array}{cccc}
1&xy&0&0\\
0&0&y&x\\
xy&1&0&0\\
0&0&x&y\end{array}\right) \end{align}
 in the ordered basis $\{\ket{00},\ket{10},\ket{01},\ket{11}\}$.
By the MacWilliams identity (\ref{thm:8_revised}), we have, 
\begin{align*}
\Lambda_{\widehat{\cC}_{(j)}}^{IP}(x,y)
=& \left(\begin{array}{cccc}
1&0&xy&0\\
xy&0&1&0\\
0&y&0&x\\
0&x&0&y\end{array}\right),
\end{align*}
which is the same as determined from $\hat{\cC}_{(j)}$ by Definition~\ref{def:IP WAM}.
\eep

\subsection{Weight Enumeration on the Full Trellis Diagram for Convolutional Codes} \label{subsec:remarks on free distance}
In this subsection we will  focus mainly on the Hamming weight enumeration on the codewords of a CC and consider the weight enumeration in one indeterminate since codeword length is not fixed.
We reconsider enumerating walks on the full trellis diagram\footnote{By the full trellis diagram of $\cal C$ we mean the trellis diagram of $\cal C$ with arbitrary beginning and ending states.} of $\cC$ in a matrix.
 That is, we define
\beq \label{eq_wholetrellis00}
\Lambda_{{\cal C}}(y,D) := \left( {I_{q^m}} - \Lambda_{\cC_{(j)}}(y) D \right)^{-1} = \sum_{d \geq 0} \left(\Lambda_{\cC_{(j)}}(y)\right)^d D^d,
\eeq
where the $(\bfw,\bfw')$ entry of the matrix $\left(\Lambda_{\cC_{(j)}}(y)\right)^d$ is the enumeration of the Hamming weights of length-$d$ walks that begin at state $\bfw$ at time $0$ and end at state $\bfw'$ at time $d$ on the full trellis diagram of $\cal C$.
Similarly, 
we define the full-trellis weight generating function (FWGF)
$$
\Lambda_{{\cal C}^\perp}(y,D) = \left( {I_{q^m}} - \Lambda_{\widehat{\cC}_{(j)}}(y) D \right)^{-1}. 
$$
Note that we assume the constraint codes ${\cal C}_{(j)}$  (and $\widehat{\cC}_{(j)}$)
are the same for all $j$. Then we have the following identity.

\begin{theorem} \label{thm:9} 
The FWGFs of a CC $\cC$ and its dual are related by the following MacWilliams identity:
\begin{align}
&\Lambda_{{\cal C}^\perp}(y,D) \notag\\
=&  \frac{1}{q^m} {\cF}_{\mF_q^m}\Lambda_{{\cal C}}\left( \frac{1-y}{1+(q-1)y},\ \frac{(1+(q-1)y)^n}{q^{k}}D\right) {\cF}_{\mF_q^m}^\dag. \label{eq:15}
\end{align}
\end{theorem}
\begin{proof}
It follows directly from the definitions of $\Lambda_{{\cal C}}(y,D)$, $\Lambda_{{\cal C}^\perp}(y,D)$ and Theorem \ref{thm:MacWilliamsIdentity_IO WAM} that 
{\small
\begin{align*}
&\Lambda_{{\cal C}^\perp}(y,D) =  \left( {I_{q^m}} - {\Lambda}_{\widehat{\cC}_{(j)}}(y) D \right)^{-1} \\
=&    \left[ { I_{q^m}} - \frac{(1+(q-1)y)^n}{q^{m+k}} {\cF}_{\mF_q^m}  \Lambda_{\cC_{(j)}}\left(  \frac{1-y}{1+(q-1)y}\right) {\cF}_{\mF_q^m}^\dag D \right]^{-1} \\
=& \frac{1}{q^m}   {\cF}_{\mF_q^m} \left[ {I_{q^m}} - \frac{(1+(q-1)y)^n}{q^{k}} \Lambda_{\cC_{(j)}}\left(  \frac{1-y}{1+(q-1)y}\right) D\right]^{-1} {\cF}_{\mF_q^m} ^\dag \\
=& \frac{1}{q^m} {\cF}_{\mF_q^m}\Lambda_{{\cal C}}\left( \frac{1-y}{1+(q-1)y},\ \frac{(1+(q-1)y)^n}{q^{k}}D\right) {\cF}_{\mF_q^m}^\dag.
\end{align*}
}
\end{proof}

Finally, 
we  state the duality result for full-trellis IPWGF for a systematic CC $\cal C$. 
For any codeword $\bfc \in {\cal C}$, let the weight function of $\bfc$ be
\[
f_{D}^{\tm{IP}} (\bfc)= x^{\sum_{i=0}^{\deg(\bfc)} \wt{\bfc_{i,I}}} y^{\sum_{i=0}^{\deg(\bfc)}\wt{\bfc_{i,P}}} D^{\deg(\bfc)},
\]
where $\bfc_i = \left( \bfc_{i,I}:  \bfc_{i,P} \right) $ is defined as in Example~\ref{ex:ip}.
Suppose $\cC$ is a systematic CC and ${\cal C}^\perp$ is the dual code of $\cal C$. Let  
\beq \label{eq_wholetrellis00IP}
\Lambda_{{\cal C}}^{IP}(x,y,D) := \left( {I_{q^m}} - \Lambda_{\cC_{(j)}}^{IP}(x,y) D \right)^{-1}
\eeq
be the full-trellis input-parity weight adjacency matrix (FIPWAM) for walks on the full trellis diagram of $\cal C$ defined with respect to the above weight function.
Along the same way, we have the following theorem.

\bt \label{thm:20}

The FIPWAMs of a systematic CC $\cal C$ and its dual are related by the following MacWilliams identity:
{
\begin{align*}
&\Lambda_{{\cal C}^\perp}^{IP}(x, y,D) = \frac{1}{q^m} {\cF}_{\mF_q^m}\Lambda_{{\cal C}}^{IP}\left( \frac{1-y}{1+(q-1)y},\right.\\
&\left.  \frac{1-x}{1+(q-1)x}, \frac{(1+(q-1)y)^k(1+(q-1)x)^{n-k}}{q^{k}}D\right) {\cF}_{\mF_q^m}^\dag.
\end{align*}
}
\etp

\subsection{Relation Diagram}

The relation of this weight enumeration and the free distance enumeration of a CC has been discussed in \cite{McE98,GL05}.
Here we discuss further with the introduction of MacWilliams identity.

Let $W_N^{IP}(x,y)$ denote the input-parity weight enumerator in $x,y$ for the paths of length $N$ and whose initial and final states are both $\bf 0$.
\label{def_twgf}The \emph{total input-parity weight generating function} (TIPWGF) $W_{{\cal C}}(x,y,D)$ of a CC  is 
\begin{align*}
W_{{\cal C}}^{IP}(x,y,D) = \ \sum_{N\geq 0} W_N^{IP}(x,y) D^N.
\end{align*}
From \cite[Theorem 4.1]{McE98}, we have
\begin{align}
W_{{\cal C}}^{IP}(x,y,D)  &=  \bra{{\bf 0}} \Lambda_{\cC}^{IP}(x,y,D) \ket{{\bf 0}} \label{eq:5} \\
=& \bra{{\bf 0}} \left( { I_{q^m}} - \Lambda_{\cC_{(j)}}^{IP}(x,y) D \right)^{-1} \ket{{\bf 0}}, \label{eq:36}
\end{align}
where $\ket{\bf{0}}$ is the zero vector in $\mF_q^m$ and $\bra{{\bf 0}}M \ket{{\bf 0}}$ represents the $(0,0)$-entry of a  matrix $M$.

The input-parity free distance  enumerator of a CC $W_{\cC_\text{free}}^{IP}(x,y,D)$ is similarly defined but it counts the paths which do not enter the zero state except at the beginning and end.
Since the contribution of the path from state $(\bf 0,0)$ to state $(\bf 0,0)$ is unwanted, $\left( \Lambda_{\cC_{(j)}}^{IP}(x,y) - \ket{{\bf 0}}\bra{{\bf 0}} \right)$ is considered instead.
From \cite[Theorem 3.1]{McE98} (see also \cite{GL05}), we have
\beq
W_{{\cC_\tm{free}}}^{IP} (x,y,D) \ = 1-\frac{1}{\bra{{\bf 0}} \left( {I_{q^m}} - \left( \Lambda_{\cC_{(j)}}^{IP}(x,y) - \ket{{\bf 0}}\bra{{\bf 0}} \right) D \right)^{-1} \ket{{\bf 0}}}. \label{eq:wcdiv}
\eeq
\begin{theorem}
\label{cor:9}
The TIPWGF $W_{{\cal C}}^{IP}(x,y,D)$ and the input-parity free distance  enumerator $W_{\cC_\text{free}}^{IP}(x,y,D)$ can be derived from each other:
\begin{align}
W_{\cC_\tm{free}}^{IP} (x,y,D) &=1- \frac{1+W_{{\cal C}}^{IP}(x,y,D) D}{W_{{\cal C}}^{IP}(x,y,D)}.\label{eq:21}
\end{align}
\end{theorem}
\begin{proof}
Rewrite \eqref{eq:wcdiv} as
\begin{align*}
&\frac{1}{1-W_{\cC_\tm{free}}^{IP} (x,y,D)} \\
 =& \bra{{\bf 0}} \left[ \left( { I_{q^m}} - \Lambda_{\cC_{(j)}}^{IP}(x,y) D \right) + \ket{{\bf 0}}\bra{{\bf 0}} D \right]^{-1} \ket{{\bf 0}},
\end{align*}
and note that $W_{\cal C}(x,y,D)=\bra{{\bf 0}}\left( {I_{q^m}} - \Lambda_{{\cal C}_{(j)}}^{IP}(x,y)  D\right)^{-1} \ket{{\bf 0}}$.
By Woodbury identity for matrix inverse, we have (\ref{eq:19}),
\begin{figure*}[!t]
\begin{align}
\lefteqn{\left[ \left( { I_{q^m}} - \Lambda_{\cC_{(j)}}^{IP}(x,y) D \right) + \ket{{\bf 0}}\bra{{\bf 0}} D \right]^{-1}}\notag\\
&=& \left( { I_{q^m}} - \Lambda_{\cC_{(j)}}^{IP}(x,y) D \right) ^{-1} - \frac{D}{1 + \bra{{\bf 0}} \left( { I_{q^m}} - \Lambda_{\cC_{(j)}}^{IP}(x,y) D \right)^{-1}  \ket{{\bf 0}}D} \left( {I_{q^m}} - \Lambda_{\cC_{(j)}}^{IP}(x,y) D \right)^{-1} \ket{{\bf 0}}\bra{{\bf 0}}
\left( { I_{q^m}} - \Lambda_{\cC_{(j)}}^{IP}(x,y) D \right)^{-1} \label{eq:19}
\end{align}
$\overline{\hspace{\textwidth}}$
\end{figure*}
which in turn gives the desired equation by considering the $(\bf 0,0)$-entry of both sides.

\end{proof}
We summarize the relations\footnote{We thank anonymous referees for pointing out some mistakes in our previous results \cite[Sec.~V]{LHL14b}.
More precisely,   we made a mistake on counting the free distance paths when deriving Theorem~\ref{cor:9}:
 we were counting the so-called ``molecular paths" while we claimed to count the ``atomic paths" \cite{McE98}. Thus (16)-(18) in \cite[Sec.~V]{LHL14b}  and the discussion there are incorrect. }
of these weight enumerators in Fig. \ref{fig:relations_diagram}.
Shearer and McEliece  have observed more than 40 years ago  that the MacWilliams identity does not exist between $W_{\cC_\tm{free}}^{IP} (y,y,D)$ and $W_{{\cC}_\tm{free}^\perp}^{IP} (y,y,D)$ \cite{SM77}. We have an explanation from Figure~\ref{fig:relations_diagram}: This is because there is a one-one correspondence, i.e. a duality, between $\Lambda_{{\cal C}}^{IP}(x,y,D)$ and $\Lambda_{{\cal C}^\perp}^{IP}(x,y,D)$ but not for $W_{{\cal C}}^{IP}(x,y,D)$ and $W_{{\cal C}^\perp}^{IP}(x,y,D)$, since $W_{{\cal C}}^{IP}(x,y,D)$ corresponds  to only the entry of $\Lambda_{{\cal C}}^{IP}(x,y,D)$ associated with $({\bf 0},{\bf 0})$-entry as shown in  (\ref{eq:5}).
We point out that, in~\cite{Forney11}, Forney considered terminated CCs and defined a normalized Hamming weight distribution $\bar{g}_{\cC}(y)=\lim_{N\rightarrow \infty} \text{Tr }{\Lambda^N(y)} $, which has a MacWilliams identity, since for finite $N$, terminated convolutional codes are linear block codes.
He showed, by examples, that the free distance enumerator converges to $\bar{g}_\cC(y)$ rapidly.

\begin{figure}
\[
\xymatrix{
&\Lambda_{{\cC}_{(j)}}^{IP}(x,y)\ar[d]^{\text{Eq.}~(\ref{eq_wholetrellis00IP})} \ar[r]^{\text{Theorem}~\ref{thm:MacWilliamsIdentity_IO WAM}} &\Lambda_{\hat{\cal C}_{(j)}}^{IP}(x,y) \ar[l]\ar[d]^{\text{Eq.}~(\ref{eq_wholetrellis00IP})}&\\
 &  \Lambda_{{\cal C}}^{IP}(x,y,D)  \ar[d]^{\text{Eq.}~(\ref{eq:5})} \ar[r]^{\text{Theorem}~\ref{thm:20}}  & \Lambda_{{\cal C}^\perp}^{IP}(x,y,D) \ar[l] \ar[d]^{\text{Eq.}~(\ref{eq:5})}  \\
& W_{{\cal C}}^{IP}(x,y,D) \ar[d]^{\text{Theorem}~\ref{cor:9}} &W_{{\cal C}^\perp}^{IP}(x,y,D)\ar[d]^{\text{Theorem}~\ref{cor:9}} \\
 & W_{\cC_\tm{free}}^{IP} (x,y,D) \ar[u] & W_{\cC_\tm{free}^\perp}^{IP}(x,y,D) \ar[u]
}
\]
\caption{Relation diagram of various weight enumerations. A relation $A \rightarrow B$ means that $B$ can be derived given $A$.
When $x=y$, we have the relations for Hamming weight generating functions.} \label{fig:relations_diagram}
\end{figure}
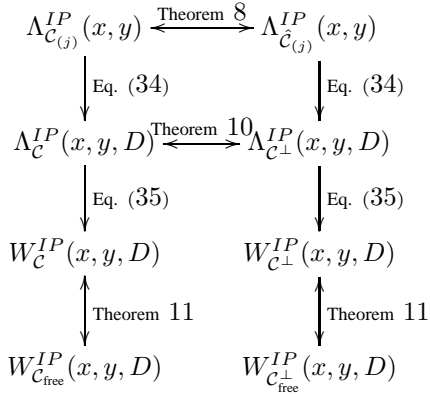

The following corollary can be obtained  by combining Theorem~\ref{thm:20} and Eq.~(\ref{eq:5}). 
\begin{corollary} \label{eq:cor6}
The  TIPWGF of the dual code of a CC  can be determined
from its FIPWAM (\ref{eq_wholetrellis00}):
\begin{align*}
&W_{{\cC}^\perp}^{IP}(x,y,D)= \bra{{\bf 0}} \frac{1}{q^m} {\cF}_{\mF_q^m}\Lambda_{{\cal C}}^{IP}\left( \frac{1-y}{1+(q-1)y},\right.\\
&\left.  \frac{1-x}{1+(q-1)x}, \frac{(1+(q-1)y)^k(1+(q-1)x)^{n-k}}{q^{k}}D\right) {\cF}_{\mF_q^m}^\dag\ket{{\bf 0}}.
\end{align*}
\end{corollary}
These results can be directly generalized to input-output free distance enumerators. Similar to (\ref{eq:14}), let
\begin{align}
W_{{\cC_{\tm{free}}}}^{IO}(x,y,D)=&  W_{{\cC_{\tm{free}}}}^{IP}(xy,y,D) \label{eq:39}  
\end{align}
The bit error rate (BER) $P_b$ of a CC over a $q$-ary symmetric channel with error rate $\epsilon$ can be upper bounded by $W_{{\cal C}_{\tm{free}}}^{IO}(x,y,D)$ as follows~\cite{McE02}:
\begin{align}
P_b\leq \left.\frac{1}{k}\frac{\partial W_{{\cal C}_{\tm{free}}}^{IO}(x,y,D)}{\partial x}\right|_{x=D=1,y=\beta}, \label{eq:41}
\end{align}
where $\beta= 2\sqrt{\epsilon (1-(q-1)\epsilon)}+\epsilon(q-2)$.\\

\noindent\textbf{Example~\ref{ex:ip_WAM}.} (Continued.)

By (\ref{eq:39}), (\ref{eq:36}), Theorem~\ref{cor:9} and $\Lambda_{{\cC}_{(j)}}^{IP}(x,y)$ in (\ref{eq:31}),   we have
\begin{align}W_{{\cC_{\tm{free}}}}^{IO}(x,y,D)=& \frac{D^3 x^2 y^5 (1 + D (-1 + x^2) y)}{1 - D y - D^2 y -
  D^3 (-1 + x^2) y^2} \notag \\
=&  W_{{\cC_{\tm{free}^{\perp}}}}^{IO}(x,y,D),
\end{align}
where the MacWilliams identity in Theorem \ref{thm:MacWilliamsIdentity_IO WAM} or Theorem \ref{thm:20}  is used for the dual code.
Thus by (\ref{eq:41}), the BER of $\cC$ or $\cC^{\perp}$ over a binary symmetric channel with rate $\epsilon$ is bounded by
\[
P_b\leq \frac{2 \beta^5 (1 - \beta - \beta^2)}{(1 - 2 \beta)^2} =2\beta^5+O(\beta^6),
\]
where $\beta=2\sqrt{\epsilon(1-\epsilon)}$.

The BERs of $\cC$ and $\cC^{\perp}$ are shown in Fig.~\ref{fig:cc_sim_plot}, together with the analytical upper bound (\ref{eq:41}).
The Viterbi decoder is used with traceback length (TL) $5$ and $10$.
The BERs of the CC and its dual are very close and they are close to the upper bound at low $p$.
When TL is greater or equal to $5$, the simulated BER curve is completely below the upper bound.
The theoretic upper bound obtained by the input-parity free distance enumerator provides a reasonable performance benchmarking for the CCs.

\begin{figure}[ht]
\[\includegraphics[width=9.0cm]{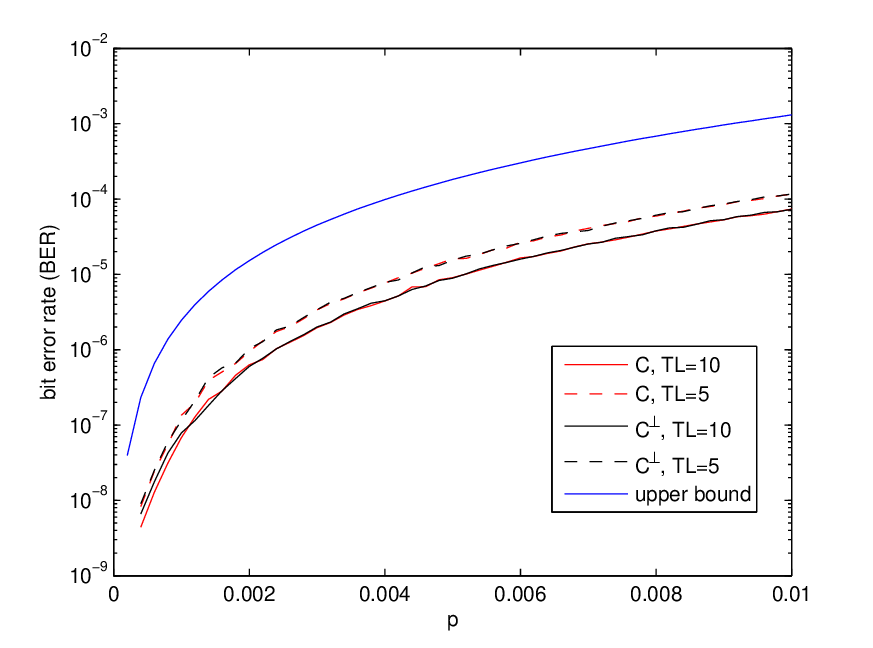}\]
  \caption{Comparision of the BER of $\cC$ or $\cC^{\perp}$ over a binary symmetric channel with rate $p$ and the upper bound (\ref{eq:41}).
 For each $p$, $10^8$ samples are simulated.}
\label{fig:cc_sim_plot}
\end{figure}
\section{The MacWilliams Identity for Quantum Convolutional Codes}
\label{sec:quantum MI}
\subsection{Pauli Operators and Quantum Codes} \label{sec:IntroQECC}

Poulin \emph{et al.} devised a representation of quantum convolutional codes by a seed transformation and  defined the associated state diagram  \cite{PTO09,OT03}.
Entanglement-assisted quantum convolutional codes (EA-QCCs) are developed in \cite{WHZ14}.
We will use this representation to develop the duality in quantum convolutional codes.

Let us start with some notation.
A single-qubit state  is a vector in the complex Hilbert space $\mathbb{C}^2$ and an $n$-qubit state is a vector in $\mC^{2^n}$.
The Pauli matrices
$I=\begin{bmatrix}1 &0\\0&1\end{bmatrix},$ $X=\begin{bmatrix}0 &1\\1&0\end{bmatrix}$, $Z=\begin{bmatrix}1 &0\\0&-1\end{bmatrix},$   and  $Y=iZX$
form a basis of the space of linear operators on $\mC^2$.
Let $\mathcal{G}_1=\{\pm I, \pm iI, \pm X, \pm iX, \pm Y, \pm iY, \pm Z, \pm iZ\}$ be the Pauli group.
and let $\mathcal{G}_n=\mathcal{G}_1^{\otimes n}$ denote the $n$-fold Pauli group.
The weight $\wt{g}$ of $g\in \mathcal{G}_n$ is the number of its components  that are not the identity operator.
Let $Z_i=I^{\otimes i-1}\otimes Z\otimes I^{\otimes n-i}$ and $X_i=I^{\otimes i-1}\otimes X\otimes I^{\otimes n-i} $ be the Pauli operators on the $i$-th qubit for convenience
and the total number of qubits is clear from the context.
For $g, h \in {\mathcal{G}}_n$, the \emph{symplectic inner product} $\langle\ ,\ \rangle_{\cG_n}$  is defined by
\begin{align} \label{eq:Pauli inner product}
\langle g, h\rangle_{\cG_n}= \left\{
               \begin{array}{ll}
                 0, & \text{if }gh-hg=0; \\
                 1, & \text{if }gh+hg=0.
               \end{array}
             \right.
\end{align}

A \emph{seed transformation} $U$ on $\mC^{2^n}$   is a unitary Clifford operator that preserves $\cG_n$ under conjugation \cite{CRSS98,Got97,BDM062}.
Suppose  $\mathcal{S}$ is an Abelian subgroup of  $\mathcal{G}_n$
with a set of $ n-k$ independent generators  defined by a seed transformation $U$
$$\cS=\{ U(I^{\otimes k}\otimes S^Z) U^{\dag}: S^Z\in \{I,Z\}^{\otimes n-k}\}$$
and $\mathcal{S}$ does not include $ -I^{\otimes n}$.
An $[[n,k]]$ quantum stabilizer code $\cC_{\cS}$ is the joint-$(+1)$ eigenspace of  $\cS\subset\cG_n$.
That is, $$\cC_\cS=\{\ket{\psi}\in \mC^{2^n}: g\ket{\psi}=\ket{\psi}, \forall g\in \mathcal{S}\}.$$
$U$ is  called a Clifford encoder of the quantum code.
In this definition we implicitly assume that the first $k$ qubits before encoding are logical qubits.
Let
\begin{align*}
{C}(\mathcal{S})
=&\{U(L\otimes S^{Z})U^{\dag}: S^Z\in\{I,Z\}^{\otimes n-k} , L\in \cG_k \}
\end{align*}
be the centralizer group of $\cS$, consisting of operators  in $\cG_n$ that commute with the stabilizers. 
Thus elements in $C(\cS)$ are logical operators since they operate on the code space.
Note that the orthogonal group of $\cS$  is ${C}(\mathcal{S})$. 

Even though an error can be a linear combination of Pauli operators, it will be \emph{discretized} (in the basis of Pauli operators) after quantum measurements. Also, the overall phase cannot be observed.
Thus it suffices to consider  errors in
\begin{align}
\bar{\cG}_n=\{ M_1\otimes \cdots \otimes M_n: M_j\in\{I,X,Y,Z\}\}
\end{align}
by ignoring the phase $\pm 1, \pm i$.
Note that $\cS$ can  be chosen to be a subgroup of $\bar{\cG}_n$.
Then the error analysis of the quantum code
is equivalent to  an classical additive block code in $\mF_{2^2}$~\cite{CRSS98}.
Thus the performance of the quantum code is completely characterized by the stabilizer group $\cS$.

In the scheme of EA quantum codes, maximally-entangled states $(\ket{00}+\ket{11})/{\sqrt{2}}$ are shared between the sender and receiver, which is an $+1$ eigenstate of $X\otimes X$ and $Z\otimes Z$.
(For more details about EA quantum codes, please refer to \cite{BDM06,BDM062}.)
Suppose  $\mathcal{S}\in \bar{\mathcal{G}}_{n+c}$ is a stabilizer group with $n-k+c$ independent generators
\begin{align*}
\cS=& \{U(I^{\otimes k}\otimes S^Z\otimes S^E )U^{\dag}\otimes S^E:  S^Z\in \{I,Z\}^{\otimes n-k-c},\\
& S^E\in\bar{\cG}_c \}.
\end{align*}
Then $\cS$ defines an $\left[\left[  n,k;c\right]  \right]$ EA stabilizer code, which is a $2^k$-dimensional subspace of $\mC^{2^n}\otimes \mC^{2^c}$.
Herein, we assume that before encoding, the first $k$ qubits  are information qubits,  the last $2c$ qubits are the maximally-entangled states, and the other $n-k-c$ ancillas  begin in $\ket{0}$.
It is assumed that the qubits held by the receiver before communication are error-free and thus
we neglect  operators on those qubits.  
 Then the simplified stabilizer group $\mathcal{S}'\subset \bar{\cG}_n$ is
\begin{align*}
\cS'=& \{U(I^{\otimes k}\otimes S^Z\otimes S^E )U^{\dag}:  S^Z\in \{I,Z\}^{\otimes n-k-c},\\
& S^E\in\bar{\cG}_c \}.
\end{align*}
 Now $\mathcal{S}'$ is no longer Abelian
and its centralizer group is
$$
C(\cS')= \{U(L\otimes S^Z\otimes I^{\otimes c})U^{\dag}:  S^Z\in \{I,Z\}^{\otimes n-k-c}, L\in \cG_k \}.
$$

\subsection{Quantum Convolutional Codes}

Like classical CCs,
a sequence of seed transformation encoders are applied
 in the case of quantum convolutional codes
as shown in Fig. \ref{fig:EAQ convolutional encoder}.
Note that entanglement (with corresponding operators $E_j$ in Fig.~\ref{fig:EAQ convolutional encoder}) is introduced to complete the duality notion of quantum stabilizer codes~\cite{LBW13}; that is, the dual code of a stabilizer code is an EA stabilizer code.
Please refer to \cite{PTO09,WHZ14} for more details about quantum convolutional  codes.

\begin{figure}
\centerline{
    \begin{tikzpicture}[scale=0.7][very thick]
    \fontsize{6pt}{1} 
    \tikzstyle{halfnode} = [draw,fill=white,shape= underline,minimum size=1.0em]
    \tikzstyle{checknode} = [draw,fill=blue!10,shape= rectangle,minimum height=6em, minimum width=2em]
    \tikzstyle{variablenode} = [draw,fill=white, shape=circle,minimum size=0.8em]
    \node (p1) at (2,2.75) {$P_{j-1}$} ;
    \node (w1) at (-2,2.75) {$M_{j-1}$} ;
    \node (l1) at (-2,2.25) {$L_{j-1}$} ;
    \node (s1) at (-2,1.75) {$S_{j-1}$} ;
    \node (e1) at (-2.1,1.25) {$E_{j-1}$} ;
    \node (w3) at (4,-0.25) {$M_{j+1}$} ;
    \node (p3) at (4,1.25) {$P_{j}$} ;
    \node (w2) at (1,1.5) {$M_{j}$} ;
    \node (l2) at (0,0.75) {$L_{j}$} ;
    \node (s2) at (0,0.25) {$S_{j}$} ;
    \node (e2) at (0,-0.25) {$E_{j}$} ;
    \node (label51) at (-1,3.5) {$\ddots$} ;
    \node (label52) at (3,-0.5) {$\ddots$} ;
    \node (label1) at (-1.25,0.15) {receiver} ;
    \node (label2) at (0.65,-1.35) {receiver} ;
    %
     \draw (w1)-- ++(1.65,0) (l1) -- ++(1.65,0) (s1) -- ++(1.65,0) (-1.5,1.25) -- ++(1.65,0)  ;
     \draw  (l2) -- ++(1.65,0) (s2) -- ++(1.65,0) (0.4,-0.25) -- (w3)  ;
     \draw (p1)-- ++(-1.65,0)   (p3)-- ++(-3.65,0) ;
     \draw  (-2,0.75)-- (-1.5,1.25) (-2,0.75)-- (-1.5,0.25) (-1.5,0.25) -- ++(0.45,0);
     \draw  (0,-0.75)-- (0.4,-0.25) (0,-0.75)-- (0.4,-1.25) (0.4,-1.25) -- ++(0.45,0);
    \node[checknode] (cn1) at (0,2) {$U_{(j-1)}$};
    \node[checknode] (cn3) at (2,0.5) {$U_{(j)}$};
   \end{tikzpicture}
 }

  \caption{
Circuit diagram of an EA-QCC encoder with a seed transformation $U$.
  }\label{fig:EAQ convolutional encoder}
  \vspace{-1.cm}
\end{figure}
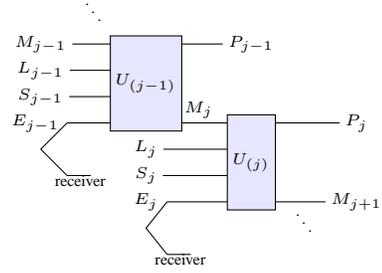

Suppose an   $(\!(n,k,c,m)\!)$ EA-QCC $\cC$
has $m$ memory qubits, $n$ output qubits,  and $k$ logical qubits  with the help of $c$ maximally-entangled states at each time step.
Suppose $\cC$ is defined by seed transformation operator $U_{(j)}$, which is an Clifford encoder on $\mC^{2^{n+m}}$.
 Let
 \begin{align*}
 \mathcal{I}^{M}=&\{1,\dots,m \},\\
 \mathcal{I}^{L}=&\{m+1,\dots,m+k\},\\
 \mathcal{I}^{A}=&\{m+k+1,\dots,m+n-c\},\\
 \mathcal{I}^{E}=&\{m+n-c+1,\dots,m+n\},\\
 \mathcal{I}^{M'}=&\{1,\dots,m\}, \\
 \mathcal{I}^P=&\{m+1,\dots,m+n\},
 \end{align*}
 which specify the locations of input memory qubits, logical qubits, ancilla qubits,  entangled qubits, and output memory qubits, and physical qubits, respectively.
  The seed transformation $U_{(j)}$ with the input parameters $(\mathcal{I}^M,\mathcal{I}^E,\mathcal{I}^A,\mathcal{I}^L)$ defines the dual constraint code $\mathcal{C}_{(j)}^{\perp}$  as  in Definition \ref{def:constraint code}.
\bd
The constraint code $\mathcal{C}_{(j)}$ of  $\cC$ at time $j$ is an $[[n+2m,k;c]]$ EA stabilizer code defined by simplified stabilizer group $\mathcal{S}_{(j)}$ with the following generators:
\begin{align*}
Z_i^M&\otimes g_i,\ X_i^M \otimes  h_i, &\mbox{ for $i\in \mathcal{I}^M$};\\
I^M&\otimes g_i, \ I^M \otimes  h_i, &\mbox{ for $i\in \mathcal{I}^E$};\\
I^M&\otimes g_i, &\mbox{ for $i\in \mathcal{I}^A$},
\end{align*}
where $Z_i^M, X_i^M, I^M \in \bar{\cG}_m$ and $g_i=U_{(j)}Z_i U^{\dag}_{(j)}$, $h_i=U_{(j)}Z_i U^{\dag}_{(j)}$ $\in \bar{\cG}_{n+m}$.
\ed

Recall that we consider stabilizer groups in $\bar{\cG}_n$, which is isomorphic to $\mF_{2^2}^n$.
Then the normal factor graph duality theorem \cite{Forney11,Forney01}\footnote{Here we employ the normal factor graph duality theorem, which is more general than our development of CCs in Lemma~\ref{def_dual_CCC}.}
suggests that the dual EA-QCC of $\cC$ is defined by the  constraint code
 $\mathcal{C}_{(j)}^{\perp}$, which is the $[[n+2m,c;k]]$ dual code of $\cC_{(j)}$ and has a simplified stabilizer group $\cS^{\perp}_{(j)}$ with the following generators:
\begin{align*}
Z_i^M&\otimes g_i,\ X_i^M \otimes  h_i, &\mbox{ for $i\in \mathcal{I}^M$};\\
I^M&\otimes g_i, \ I^M \otimes  h_i, &\mbox{ for $i\in \mathcal{I}^L$};\\
I^M&\otimes g_i, &\mbox{ for $i\in \mathcal{I}^A$}.
\end{align*}



An EA-QCC and its dual are uniquely defined up to a \emph{unitary row operator} $R$ that preserves  $\cS_{(j)}$ and $\cS_{(j)}^{\perp}$  \cite{LB10}.
 For example, $U_{(j)}R$ is a seed transformation that defines the same EA-QCC as $U_{(j)}$ does if for all $g\in\cS_{(j)}$ and $h\in\cS_{(j)}^\perp$, $RgR^\dagger\in \cS_{(j)}$ and $RhR^\dagger \in \cS_{(j)}^\perp$, respectively.

Let  $M_{(j)}, M_{(j+1)}\in\bar{\cG}_m$ be the memory operators at time  steps $j$ and $j+1$, respectively,
and let $\{L_{(j)} \in \bar{\cG}_k\}$ be the stream of logical operators.
The seed transformation $U_{(j)}$  produces a truncated\footnote{Note that the part of operators on memory qubits are discarded in this truncated stabilizer group, which is slightly different in the definition of the simplified stabilizer group of an EA stabilizer code.}
 stabilizer group $\overline{\cS}_{(j)}$ and a logical set $C(\overline{\cS}_{(j)})$ based on $M_j$ at each time step $j$.
 Consequently, the EA-QCC $\cC$ is the state space stabilized by the stabilizer group 
 $$\bigotimes_j \overline{\cS}_{(j)}= \{ \bigotimes_j g_{(j)}: g_{(j)}\in \overline{\cS}_{(j)} \}.$$
More precisely, we have
\begin{align*}
\overline{\cS}_{(j)}=& \{P_{(j)}\in \{I,X,Y,Z\}^{\otimes n}: P_{(j)}\otimes M_{(j+1)}
\\&= U_{(j)}(M_{(j)}\otimes I^{\otimes k} \otimes S^Z\otimes S^E )U_{(j)}^{\dag}\text{ for }  M_{(j)}\in \bar{\cG}_m, \\
&S^Z\in \{I,Z\}^{\otimes n-k-c}, S^E\in \bar{\cG}_c \},
\end{align*}
\begin{align*}
&C(\overline{\cS}_{(j)})=\\
& \{P_{(j)}\in \cG_n: P_{(j)}\otimes M_{(j+1)}=U(M_{(j)}\otimes L_{(j)}\otimes S^Z\otimes I^{\otimes c})U^{\dag} \\
&\text{ for } M_{(j)}\in \{I,X,Y,Z\}^{\otimes m}, S^Z\in \{I,Z\}^{\otimes n-k-c}, L_{(j)}\in \bar{\cG}_k \}.
\end{align*}
Note that an $(\!(n,k,m)\!)$ QCC is a special case of $c=0$.
%

Remark: One can define a polynomial check matrix $S(D)$ as in \cite{FGG07}
and show that $S(D)$ can be obtained from the seed transformation $U$ by finding an equation similar to (\ref{eq:G_from_ABCE}).

The weight generating function of a set $\cS\subset\mathcal{G}_n$ is  $$g_{\cS}(x,y)=\sum_{w=0}^n \nu_w x^{n-w}y^w,$$
where $\nu_w$ is the number of elements in $\cS$ of weight $w$. The WAMs of EA-QCCs are defined similarly to the classical case as follows.
\bd
The WAM $\Lambda_{\cC_{(j)}}(x,y)$ of an EA-QCC $\cC$ with  constraint code $\cC_{(j)}$ defined by a simplified stabilizer group $\cS_{(j)}$ is the matrix whose $(M_{(j)},M_{(j+1)})$ entry is
the weight generating function of the set of physical output operators $\{P_{(j)}\in C(\overline{\cS}_{(j)})\}$ when the input and output memory operators are $M_{(j)}$ and $M_{(j+1)}$, respectively. 
\ed

Consider $\bar{\mathcal{G}}_1=\{ I,X,Y,Z\}$.
By (\ref{eq:Fourier}), the matrix representation of the
Fourier transform  operator $\cF_{\bar{\cG}_1}$ in the ordered basis $\ket{I},\ket{X},\ket{Y},\ket{Z}$ is
\begin{align}
\cF_{\bar{\cG}_1}=
\begin{array}{c}
\left(\begin{array}{cccc} 1&1&1&1\\ 1&1&-1&-1 \\ 1&-1&1&-1 \\1&-1&-1&1 \end{array}\right), \label{eq:pauli fourier matrix}
\end{array}
\end{align}
where the symplectic inner product  (\ref{eq:Pauli inner product}) is used.
Then the  MacWilliams identity for EA-QCCs is as follows.
\bt \label{thm:MacWilliamsIdentity_EAQ}
Suppose the WAM of an $(\!(n,k,c,m)\!)$ EA-QCC $\cC$ is $\Lambda_{\cC_{(j)}}(y)$.
Then the WAM of its dual $\cC^{\perp}$ is
\begin{align*}
&\Lambda_{{\cC}^{\perp}_{(j)}}(y)=\frac{(1+3y)^n}{4^{m }4^{k}2^{n-k-c}}\cF_{\bar{\cG}_1}^{\otimes m}\Lambda_{\cC_{(j)}}\left(\frac{1-y}{1+3y}\right) \cF_{\bar{\cG}_1}^{\otimes m}.
\end{align*}
\et
\begin{proof}
Since the WAM $\Lambda_{\cC_{(j)}}(y)$ is counting the weight enumeration of the centralizer group $C(\cS_{(j)})$ of an $[[n+2m,k,c]]$ stabilizer code,
we have
\[
|C(\cS_{(j)})|=2^{2m}2^{2k}2^{n-k-c}.
\]
By Lemma~\ref{lemma:fourier transform},
\[\cF_{\bar{\cG}_m}=\cF_{\bar{\cG}_1}^{\otimes m}.
\]
The rest of the proof parallels that of Theorem \ref{thm:MacWilliamsIdentity_WAM} and  is omitted.
\end{proof}

The MacWilliams identity holds for the EA-QCCs, catastrophic or noncatastrophic, recursive or nonrecursive (see Refs \cite{PTO09,WHZ14} for these definitions).
For examples please refer to~\cite{LH14a}.

\section{Conclusion} \label{sec:conclusion}

With a different representation of the EWGF in our paper,
we provided a direct proof of the MacWilliams identity for the convolutional codes.
This method allows us to develop the MacWilliams identity for the IOWAMs of a CC and its dual with systematic encoders, which answered an open question in \cite{GLS09}. The input-output weight distributions are an essential part in the error analysis of iterative decoding, in particular for turbo codes. Our result could potentially lead to preliminary error analysis of both classical and quantum turbo codes.
Applications of the IOWAM for turbo codes can be found in \cite{RU08}, for example,
where they explicitly demonstrated the usefulness of IOWAMs for turbo codes.
By considering expected weight distributions of parallel and serially concatenated ensembles,  the error floor under MAP decoding can be predicted by input-output weight enumerators of minimal codewords in the asymptotic case.
In addition, one of us has developed an error analysis of linear block codes based on the MacWilliams identity for IOWGF in \cite{LKY04}. A similar error analysis for CCs and turbo codes could be potentially obtained, and is our ongoing investigation.

Mathematically speaking, various forms of weight enumerators can be seen as generalizations of one another, and each comes from a different {way of enumerating} the underlying abelian group, the $\mathbb{F}_q$ vector space.
Practically, these weight enumerators have rich applications in analyzing the error performance of a code. To elaborate, the usual Hamming weight enumerator can be used to provide an expression for the ``codeword error probability" of a code \cite{MS77}. On the other hand, to get down to the bit error rate (BER), one must require the input-output weight generating function (IOWGF) \cite{RU08}. Yet, when it comes to the non-binary channel of an $M$-ary modulation, the IOWGF is no longer sufficient for obtaining an expression for BER; it is the {complete weight generating function} being called for. Perhaps the most extreme case is when the channel is non-binary and time-varying, that is, the channel statistics varies at each transmission; then to deduce a BER expression in such a scenario we must have the {exact} weight generating function (EWGF). To sum up, the usual Hamming weight enumerator is only applicable to an error analysis for the simplest channel and offers very limited information about the structure of a code; on the contrary, the EWGF, which we used in the paper, is the most general one and is applicable to many other channels.

\section*{Acknowledgment}
CYL was supported by the Australian Research Council (ARC) under Grant DP120103776.  MH was supported by the UTS Chancellors postdoctoral research fellowship and UTS Early Career Researcher Grants Scheme during the early development of this work. MH is now supported by an ARC Future Fellowship under Grant FT140100574.

\end{document}